\documentclass[11pt]{article}


\usepackage{endnotes}
\let\footnote=\endnote

%

\usepackage{natbib}
\bibpunct[, ]{(}{)}{,}{a}{}{,}%
%
%
%
%
%


\usepackage{url}
\usepackage{nicefrac}
\usepackage{amssymb}
\usepackage{amsmath,amsopn}
\usepackage{amsthm}
\usepackage{amsfonts}
\usepackage{amssymb}
\usepackage{hyperref}
\hypersetup{
	colorlinks,
	linkcolor={red!50!black},
	citecolor={blue!50!black},
	urlcolor={blue!80!black}
}
\usepackage{breakurl}
\newcommand{\LeftEqNo}{\let\veqno\@@leqno}
\usepackage{environ}
\usepackage{xcolor}
\usepackage{float}
\usepackage{graphicx}
\usepackage[noend]{algpseudocode}
\usepackage{algorithm}
\usepackage{tikz}
\usepackage{float}
\usepackage{array}
\usepackage{enumerate}
\usepackage{subcaption}
\usepackage{booktabs}
\usepackage{comment}
\usepackage{stackrel}
\usepackage{color,soul}
\usepackage{algorithmicx}
\usepackage{graphicx}
\usepackage{xspace}

\usepackage{footnote}
\usepackage{geometry}

\usepackage{natbib}
\bibpunct[, ]{(}{)}{,}{a}{}{,}%

\usepackage{authblk}
\usepackage{lipsum}

\makeatletter
\renewcommand\footnoterule{%
	\kern-3\p@
	\hrule\@width \textwidth
	\kern2.6\p@}
\makeatother


\makeatletter
\renewcommand*{\@fnsymbol}[1]{\ensuremath{\ifcase#1\or *\or \dagger\or \ddagger\or **\or \mathsection\or \mathparagraph\or \|\or  \dagger\dagger
		\or \ddagger\ddagger \else\@ctrerr\fi}}
\makeatother

\geometry{letterpaper,
	left   = 1.0in,
	right  = 1.0in,
	top    = 1.0in,
	bottom = 1.0in}

\usepackage{hyperref}	 
\hypersetup{
	colorlinks={true},linkcolor={magenta},citecolor=blue!60!black,urlcolor  = blue!60!black,
}

\newtheorem{proposition}{Proposition}
\newtheorem{definition}{Definition}
\newtheorem{corollary}{Corollary}

\newtheorem{lemma}{Lemma}

\newcommand{\halmos}{\ensuremath{\square}}
\newcommand{\Halmos}{\ensuremath{\square}}

\newcommand{\CBPP}{\textbf{CBPP}}

\newcommand{\SSP}{\textbf{SSP}}
\newcommand{\SSPD}{\textbf{SSPD}}

\newcommand{\BPMCF}{\textbf{BPMCF}}

\newcommand{\BPP}{\textbf{BPP}}

\newcommand{\BPPC}{\textbf{BPPC}}

\newcommand{\firstIP}{\textbf{(IP)}}

\newcommand{\objects}{\ensuremath{\mathcal{O}}}
\newcommand{\object}{\ensuremath{o}}

\newcommand{\objcost}{\ensuremath{v}}

\newcommand{\objcostd}{\ensuremath{v^d}}

\newcommand{\size}{\ensuremath{s}}

\newcommand{\colour}{\ensuremath{g}}
\newcommand{\colours}{\ensuremath{G}}

\newcommand{\numbins}{\ensuremath{k}}

\newcommand{\binsize}{\ensuremath{B}}

\newcommand{\bins}{\ensuremath{ \mathcal{B} }}
\newcommand{\bin}{\ensuremath{ b }}

\newcommand{\sizebin}{\ensuremath{ B }}

\newcommand{\itemsizes}{\mathcal{K}}
\newcommand{\itemsize}{k}

\newcommand{\groups}{\mathcal{G}}
\newcommand{\group}{g}



\algnewcommand{\LineComment}[1]{\State \(\triangleright\) #1}

\begin{document}

\title{Binary Decision Diagrams for Bin Packing with \\ Minimum Color Fragmentation}

\author[1]{David Bergman\thanks{david.bergman@uconn.edu }}
\author[2]{Carlos Cardonha\thanks{carloscardonha@br.ibm.com}}
\author[1]{Saharnaz Mehrani\thanks{saharnaz.mehrani@uconn.edu}} 
\affil[1]{\small Department of Operations and Information Management, University of Connecticut}
\affil[2]{\small IBM Research}
\maketitle

\begin{abstract}	
		Bin Packing with Minimum Color Fragmentation (\BPMCF) is an extension of the Bin Packing Problem in which each item has a size and a color and the goal is to minimize the sum of the number of bins containing items of each color.  
		In this work, we introduce the~\BPMCF{} and present a decomposition strategy to solve the problem, where the assignment of items to bins is formulated as a binary decision diagram and an optimal integrated solutions is identified through a mixed-integer linear programming model. Our computational experiments show that the proposed approach greatly outperforms a direct formulation of~\BPMCF{} and that its performance is suitable for large instances of the problem.
		\\ \\
		\smallskip
		\noindent \textbf{Keywords.} Bin Packing; Binary Decision Diagrams; Integer Programming.
\end{abstract}	
	
\section{ Introduction}

In this work, we investigate Bin Packing with Minimum Color Fragmentation (\BPMCF{}), an extension of the Bin Packing Problem in which each item is associated with a \textit{color} and one wishes to identify assignments of items to bins placing items of a common color in the fewest number of bins possible.
\BPMCF{}  finds application in scenarios where having elements of a certain category (color) in each bin results in costs that should be minimized. In production planning, for example, a company requires expensive modules in order to process orders each of which is of a certain type and has a processing time.  Each plant has limited total processing time.  In this situation, items correspond to orders and order types correspond to categories.  To minimize costs for the modules, the company strives to pack orders of each type in as few  bins as possible.  Similarly, in logistics, there are scenarios where heterogeneous cargo needs to be transported and differences in cargo necessitate vehicles equipped with specific capabilities, for example temperature or pressure controlling devices.  Additionally, in large-group seating problems that arise in corporate events or weddings, attendees form subgroups and event organizers strive to pack individuals of the same type in the fewest number of tables.

Many colored extensions of the bin packing problem have been studied in the literature, but  with a different (typically opposite) objective. Examples abound. 
In the \emph{colored bin packing problem} (\CBPP{}), 
commonly colored items are not allowed to be packed next to each other in the same bin (\cite{balogh2015online,balogh2012black,bohm2014online}).
\CBPP{} has also been investigated  under the name of \emph{class constrained bin packing problem}, mostly from a theoretical perspective (\cite{shachnai2001polynomial,shachnai2004tight,xavier2008class}). Approximation results have  been obtained for the \emph{variable class constrained bin packing problem}, where bins may have different sizes and the goal is to minimize the sum of the sizes of the bins that have been used (\cite{dawande2001variable,xavier2008class}).
Jansen introduced the \emph{bin packing problem with conflicts} (\BPPC{}), a generalization of~\CBPP{}  where we are given a graph on the items, with edges indicating pairs of elements that  cannot be placed in the same bin~(\cite{jansen1999approximation,jansen1997approximation}).
Several algorithms have been introduced in the literature to address the~\BPPC{} (\cite{elhedhli2011branch,gendreau2004heuristics,muritiba2010algorithms,sadykov2013bin}), such as a branch-and-price algorithm  for general conflict graphs (\cite{sadykov2013bin}).
Another variant of~\CBPP{} is the \emph{co-printing problem}, in which bins are bounded both in terms of weight and number of colors they may contain (\cite{peeters2004co}); both heuristic and exact algorithms have been proposed to solve this problem (\cite{kochetov2017vns,kondakov2018core}). 
To the best of our knowledge, \BPMCF{} is yet to be investigated in the literature yet.


In this article, we introduce~\BPMCF{}, show how to cast the assignment of items to bins as a binary decision diagram (BDD) (\cite{Bry86,Bry92}), and present a mixed-integer linear programming (MIP) formulation to solve the problem. BDDs and their multivalued extension have been successfully applied in different applications for optimization (\cite{andersen2007constraint,miller1998implementing,BerHoeHoo11,matsumoto2018decision}). Decomposition strategies relying on the combination of decision diagrams and integer programming, such as the one employed in this work, have been applied to other optimization problems (\cite{bergman2016decomposition,bergman2018decision,BerCir17,LozBerSmi18}). Our experiments suggest the efficiency of the proposed algorithm, with a clear superiority over a direct  MIP formulation of~\BPMCF{}.



\section{Problem overview}\label{sec:Description}

A formal definition of~\BPMCF{} is presented below:


\begin{definition}[Bin Packing with Minimum Color Fragmentation]\label{def:BPMCF}
	Let~$\numbins \in \mathbb{N}$ denote the number of bins, $\binsize \in \mathbb{N}$ be the capacity of each bin, $\colours \subseteq \mathbb{N}$ be a set of colors, and $\objects  = \{\object_1,\object_2,\ldots,\object_n\}$ be a set  of indivisible items such that, for each $\object \in \objects$, $\size(\object) \in \mathbb{N}$ denotes its size and
	$\colour(\object) \in \colours$ its color. A feasible solution for the problem consists of a partition of~$\objects$ into disjoint sets~$\objects_1,\objects_2,\ldots,\objects_\numbins$ 
	such that $\sum_{\object \in \objects_i}\size(\object) \leq \sizebin$, $\forall i \in [k]$. 
	Let~$n_\colour$ denote the number of bins containing items of color~$\colour$. A feasible solution is said to be optimal if it minimizes~$\sum_{\colour \in \colours}n_\colour$. 
\end{definition}


Set~$\objects_\colour = \{ \object \in \objects: \colour(\object) = \colour \}$ contains the items in~$\objects$ whose color is~$\colour$, $\objects_\size = \{ \object \in \objects: \size(\object) = \size\}$ is the set of items of size~$\size$, and set $\objects_{\colour,\size} = \{ \object \in \objects_\colour: \size(\object) = \size\} = \objects_\colour \cap \objects_\size$ contains all items of color~$\colour$ and size~$\size$.  We use~$\bins$ to denote the set of bins available for packing.  $\itemsizes_\group = \{ s \in \mathbb{N}: \exists \object \in \objects_\group \text{ with } \size(\object) = s\}$ is the set of all sizes of items belonging to~$\objects_\group$. 
In scenarios where bins may have different sizes, we use~$\binsize_\bin$ to represent the capacity of bin~$\bin$.

\BPP{} is NP-hard (\cite{garey1978strong,garey2002computers}), and the following results shows that~\BPMCF{} is also challenging in the general but easy in a special case.
\begin{proposition}
	Deciding whether an arbitrary instance of~\BPMCF{} with at least 2 bins is feasible is NP-complete.	 
\end{proposition}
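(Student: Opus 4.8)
The plan is to observe that the feasibility question for \BPMCF{} is exactly the feasibility question for ordinary bin packing with a fixed number of bins --- the colors are immaterial to feasibility --- and then to invoke a standard reduction. For membership in NP, I would use the partition $\objects_1,\dots,\objects_\numbins$ itself as the certificate: checking disjointness, that the sets cover $\objects$, and the capacity inequalities $\sum_{\object\in\objects_i}\size(\object)\le\binsize$ is clearly polynomial, and since no constraint in Definition~\ref{def:BPMCF} involves the colors, this settles the easy direction.

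For hardness, I would reduce from \textsc{Partition}. Given positive integers $a_1,\dots,a_m$ with $\sum_j a_j = 2T$, first discard the trivial no-instances in which some $a_j>T$ (both blocks of a valid partition must have size exactly $T$). Then build a \BPMCF{} instance with $\numbins=2$, $\binsize=T$, items $\object_1,\dots,\object_m$ of sizes $\size(\object_j)=a_j$, and a single color shared by all items (any color assignment works equally well). Because the items have total size $2T$ and each bin holds at most $T$, every feasible packing must split the items into two groups summing to exactly $T$, and conversely a \textsc{Partition} witness gives such a packing; hence the \BPMCF{} instance is feasible if and only if the \textsc{Partition} instance is a yes-instance. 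The reduction is plainly polynomial.

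To handle an input in which $\numbins$ is given and only required to satisfy $\numbins\ge 2$, I would either note that the $\numbins=2$ instances above already form a sub-family of the problem, so hardness is inherited, or pad the construction with $\numbins-2$ dummy items of size $\binsize=T$: each dummy exactly fills a bin and cannot share it with anything, so $\numbins-2$ bins are consumed by dummies and the remaining two bins must pack the original items exactly as before.

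I do not expect a genuine obstacle. The one place to be careful is the preprocessing step removing instances with $a_j>T$, since otherwise the capacity bound $\binsize=T$ could by itself render an instance infeasible and break the equivalence. It is also worth remarking that \textsc{Partition} is NP-complete, which is precisely what the statement asks for; if one instead wanted strong NP-hardness, the same construction starting from \textsc{3-Partition} (with $\numbins=m$ bins of capacity $T$ and the $3m$ given numbers as item sizes) would work.
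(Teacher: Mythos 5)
Your proof is correct and follows essentially the same route as the paper: a reduction from \textsc{Partition} using a single color and two bins each of capacity half the total size. You are in fact somewhat more thorough than the paper's own argument, which omits the NP-membership check, the preprocessing of oversized elements, and the dummy-item padding for $\numbins > 2$.
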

\begin{proof}
	The result follows from a reduction of the Partition Problem (\textbf{PP}), which is NP-complete (\cite{karp1972reducibility}). Given an instance~$I$ of \textbf{PP} with a set~$A$ of elements, with each~$a \in A$ of size~$s'(a)$, we create an instance~$I'$ of~\BPMCF{} such that each item~$i$ in~$I'$ is associated with an element~$i(a)$ of~$A$, has size~$\size(i(a))) = s'(a)$, and group~$\colour(i(a)) = 1$, i.e., all items have the same group. Moreover, $I'$ consists of two bins, each of size $\frac{1}{2}\sum_{a \in A}s'(a)$. A feasible solution for~$I$ can be directly converted into a solution for~$I'$ and vice-versa, so the reduction follows.	
	$\halmos$
\end{proof}

\begin{proposition}\label{prop:easy2bins}
	If each bin can contain at most two items, an optimal solution  of~\BPMCF{} can be computed in polynomial time.	 
\end{proposition}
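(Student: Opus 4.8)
The plan is to reduce $\BPMCF$ restricted to at most two items per bin to a single maximum-weight perfect matching computation. The first step is a structural observation: under this restriction a feasible solution is simply a partition of $\objects$ into singletons and pairs, using at most $\numbins$ nonempty parts, with the two items of every pair summing to at most $\binsize$; moreover its objective value equals $n - q$, where $q$ is the number of \emph{monochromatic} pairs (pairs whose two items share a color). Indeed, fix a color $\colour$: each item of $\objects_\colour$ lies in exactly one part, and two of them lie in a common part precisely when they form a monochromatic pair of color $\colour$, so the number of parts meeting $\objects_\colour$ is $|\objects_\colour|$ minus the number of monochromatic pairs of color $\colour$; summing over $\colour \in \colours$ gives $\sum_{\colour \in \colours} n_\colour = n - q$. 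Hence minimizing the objective is equivalent to \emph{maximizing the number of monochromatic pairs} over all such partitions.

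The second step encodes these partitions as perfect matchings. We may assume $2\numbins \ge n$ and $\size(\object) \le \binsize$ for every $\object \in \objects$, since otherwise the instance is infeasible. Form a graph $H$ whose vertices are the $n$ items together with $2\numbins - n$ auxiliary vertices of size $0$ that are declared not to be items of any color; join two vertices by an edge iff the sum of their sizes is at most $\binsize$, and give that edge weight $1$ if both endpoints are items of a common color and weight $0$ otherwise. A perfect matching of $H$ splits its $2\numbins$ vertices into $\numbins$ pairs; reading each pair, with auxiliary vertices deleted, as the contents of one bin yields precisely the feasible solutions with at most $\numbins$ bins and at most two items per bin, and the matching's weight equals the number of monochromatic pairs. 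Conversely, any feasible solution lifts to a perfect matching by filling the empty slots of under-full and empty bins with the auxiliary vertices, of which there are exactly the right number. Therefore the instance is feasible iff $H$ admits a perfect matching, and its optimal objective value is $n$ minus the maximum weight of a perfect matching of $H$.

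For the third step, a maximum-weight perfect matching in a general graph is computable in polynomial time by Edmonds' blossom algorithm, which simultaneously detects whether any perfect matching exists; the bin assignment is then read off directly from the matching. This establishes the proposition.

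The point I expect to require the most care is the equivalence in the second step: verifying that the count $2\numbins - n$ of auxiliary vertices is exactly what is needed so that \emph{every} feasible solution (including those with empty or single-item bins) lifts to a perfect matching, and that treating the auxiliary vertices as non-items prevents auxiliary--auxiliary or auxiliary--item pairs from ever contributing weight. A further, more substantive extension is the case of bins with differing capacities $\binsize_\bin$: there the bins can no longer be treated interchangeably, and I would instead reduce to a (still polynomial) problem that simultaneously pairs items and assigns the resulting pairs and singletons to bins within capacity, for instance a min-cost flow formulation, which I would develop as the natural generalization of the construction above.
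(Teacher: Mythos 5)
Your proof is correct, and it uses the same underlying tool as the paper — Edmonds' matching on a graph of items augmented with dummy vertices — but the route through it is genuinely cleaner. The paper never states the identity you prove first (with at most two items per bin the objective equals $|\objects|$ minus the number of monochromatic pairs); instead it enumerates the possible number $q$ of paired items, builds for each $q$ a graph with $|\objects|-2q$ artificial vertices, and uses large engineered weights ($2+|\objects|$ for same-color pairs, $1+|\objects|$ for mixed pairs, $|\objects|^2$ for item--artificial edges) so that a maximum-weight matching is forced to cover the artificial vertices and to prefer monochromatic pairs, solving up to $|\objects|$ matching problems in total. Your version replaces all of this with a single maximum-weight \emph{perfect} matching on $2\numbins$ vertices with $0/1$ weights: the count $2\numbins-|\objects|$ of auxiliary vertices (which you verify equals $b+2c$ for a solution with $b$ one-item and $c$ empty bins) makes the correspondence between perfect matchings and feasible packings exact, feasibility detection comes for free, and the optimal value is read off as $|\objects|$ minus the matching weight — no iteration over $q$ and no weight-separation argument to justify. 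You are also more careful than the paper's write-up in one respect: you only create an item--item edge when the two sizes fit within $\binsize$, a restriction the paper leaves implicit. Your closing remark on heterogeneous capacities $\binsize_\bin$ is beyond what the proposition claims (Definition~\ref{def:BPMCF} has a single capacity $\binsize$), so it is an optional extension rather than a gap; as you note, the uniform-capacity matching argument does not transfer verbatim there because bins cease to be interchangeable.
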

\begin{proof}
	We can assume without loss of generality that each item~$\object$ in~$\objects$ is such that there is at least one other item~$\object'$ for which  $\size(\object) +\size(\object') \leq \binsize$.
	A feasible solution should contain at least $q = |\objects|-|\bins|$ pairs of items being placed in the same bin. For each possible value of~$q$, we create the following instance of the maximum weighted matching problem. Let $G = (V,E,w)$ be a graph where $V = V(\objects) \cup V'$,  each vertex in~$V(\objects)$ is associated with an item in~$\objects$ and vertices in~$V'$ contains $|\objects| - 2q$  artificial elements.
	Set~$E$ contains an edge for each pair $\{u,v\} \subseteq V(\objects)$; if $\colour(u) = \colour(v)$, $w(\{u,v\}) = 2+|\objects|$, and otherwise we have $w(\{u,v\}) = 1+|\objects|$.
	$E$ also contains an edge for each pair $\{u,v\} \in V(\objects) \times V'$, each with weight~$|\objects|^2$. 
	
	By construction, any optimal solution contains a set of edges covering all the artificial vertices, and solutions with~$q$ edges covering all the remaining elements have higher value than others with $q-1$ or less. The maximum weighted matching problem can be computed in polynomial time~(\cite{edmonds1965paths}), and the number of values of~$q$ that need to be inspected is bounded by~$|\objects|$, so the result follows.
	$\halmos$
\end{proof}

\section{Direct Formulation}\label{sec:MIPModels}

The following binary program is the direct formulation of~\BPMCF{} used for a baseline algorithm in our computational experiments. 
\[
\begin{array}{lllll}\label{Model:IP1}
\firstIP{} \quad & \mathrm{min}\;\;  \sum\limits_{ (\bin,\colour) \in \bins \times \colours } y_{\bin, \colour} & && \\

&\sum\limits_{\bin \in \bins} x_{\bin,\object}  &=& 1 &\forall \object \in \objects \\

&\sum\limits_{\object \in \objects} \size(\object) x_{\bin,\object}  &\leq& \binsize
&\forall \bin \in \bins \\

&x_{\bin,\object}  &\leq& y_{\bin,\colour} 
&\forall (\bin,\colour,\object) \in \bins \times \colours \times \objects_\colour \\

&x_{\bin,\object} \in \{0,1\} &&& \forall (\bin,\object) \in \bins \times \objects \\
&y_{\bin,\colour} \in \{0,1\} &&& \forall (\bin,\colour) \in \bins \times \colours
\end{array}
\]
In~\firstIP{}, the assignment of each item $ \object$ to each bin $\bin$ is defined by binary decision variable $  x_{\bin,\object} $. Additionally, we use~$ y_{\bin, \colour} $ to indicate whether bin $ \bin $ contains at least one item of color $ \colour $. The first family of constraints of~\firstIP{} asserts that each item is assigned to exactly one bin. The second family of constraints avoids assignments where the sum of the sizes of the selected items exceeds the capacity of the bin. 
The last set of constraints is used to set~$  y_{\bin,\colour} $; 
if $ x_{\bin,\object} = 1$ for some $\object \in \objects_{\colour}$, $  y_{\bin,\colour}  = 1$, whereas the objective function drives $  y_{\bin,\colour} $ to zero otherwise.


\newcommand{\BDD}{D}

\newcommand{\arcs}{A}
\newcommand{\arc}{a}

\newcommand{\BDDnodes}{N}
\newcommand{\BDDnode}{u}

\newcommand{\layers}{L}
\newcommand{\layer}{l}

\newcommand{\BDDpath}{p}

\newcommand{\pred}{\pi}

\newcommand{\arclabel}{q^{\binsize}}
\newcommand{\arcweight}{w}

\section{Binary decision diagram-based algorithm}\label{sec:BDDModel}

Our algorithm relies on a decomposition strategy in which the assignment of items to a bin is represented as a BDD and feasible solutions are given by the paths connecting the root node to the terminal node. The structure of a BDD depends solely on the capacity~$\binsize$ of the associated bin; we use~$\BDD^{\binsize}$ to refer to the BDD for bins of size of~$\binsize$ 
(dropping $B$ if it is clear from context).

\paragraph{Notation:} A BDD $\BDD^{\binsize}=(\BDDnodes^{\binsize},\arcs^{\binsize},v^{\binsize}, d^{\binsize})$ is a layered-acyclic graph composed of a set of nodes~$\BDDnodes^{\binsize}$, a set of arcs~$\arcs^{\binsize}$, together with a cost function
$v^{\binsize}:\arcs^{\binsize} \rightarrow \mathbb{R}$ and arc-domain function $d^{\binsize}:\arcs^{\binsize} \rightarrow \{0,1\}$ defined on the arcs.  Nodes in $\BDDnodes^{\binsize}$ are partitioned into a set $\layers^{\binsize} =\{0,1,...,|\objects|,|\objects|+1\}$ of layers. For every node $\BDDnode \in \BDDnodes^\binsize$, $\layer^\binsize(\BDDnode)$ denotes the layer where~$\BDDnode$ belongs. Layer 0 contains only the root node $r^\binsize$ of $\BDDnodes^{\binsize}$, and layer = $|\objects|+1$ contains its terminal node~$t^\binsize$. Each layer~$\layer \in \{1,...,|\objects|\}$ is associated with an item~$\object(\layer)$; analogously, we define $\object(\BDDnode) = \object(\layer^\binsize(\BDDnode))$ for each node~$\BDDnode \in \BDDnodes$. We assume the layers are ordered by groups first (any arbitrary ordering of~$\groups$ may be employed) and then arbitrarily in each group.
Each arc $\arc \in \arcs^{\binsize}$ is directed from a start-node~$\BDDnode^s(\arc)$ located in layer $\layer^\binsize(h(\arc)) \in \layers^\binsize \backslash t$ to an end-node $\BDDnode^e(\arc)$ located in layer $\layer^\binsize(\BDDnode^e(\arc))=\layer^\binsize(\BDDnode^s(\arc))+1$. That is, each arc connects two nodes in consecutive layers. Every node $\BDDnode \in \BDDnodes^{\binsize} \backslash \{0^\binsize\}$ has at most two arcs directed out of it with unique arc domains, referred to as the nodes \emph{one-arc} if $d(a) = 1$ and its \emph{zero-arc} if $d(a) = 0$.

In our BDD formulation, each arc~$\arc$ represents the decision about the inclusion of the item associated with its start-node~$\BDDnode^s(\arc)$ determined by the arc-domain of $\arc$.  Every root-to-terminal arc-specified path therefore encodes a collection of items defined by the one-arcs on the path.  Formally, let $p = (a^1, \ldots, a^{|\objects|})$ be a $r^B$ to $t^B$ path.  This path corresponds to the collection of items $\objects(p) := \{ o(l) : d(a^l) = 1 \}$ and has \emph{cost} $C(p) := \sum_{l=1}^{|\objects|} v^{\binsize}(\arc^l)$. Each BDD $D^B$ will satisfy the following two properties. (1) There is a one-to-one mapping between collection of items $\tilde{\objects} \subseteq \objects$ with $\sum_{o \in \tilde{\objects} } s(o) \leq B$ and root-to-terminal paths $p$ with $\objects(p) = \tilde{\objects}$.  (2) For every root-to-terminal path $p$, $C(p)$ equals the number of groups present in $\objects(p)$ (i.e., $C(p) = \left| \{ g : g(o) = g \textrm{ for } o \in \objects(p) \} \right|$).  Any such BDD is called an \emph{exact} BDD for a bin of size $B$. 

Given an instance of $\BPMCF$, suppose we build such a BDD for each bin; i.e., for $b \in \mathcal{B}$, we build $D^{B_b}$, which is an exact BDD for a bin of size $B_b$.  Any collection of paths $p^i$, one from each BDD, for which the family of sets $\objects(p^b), b \in \bins$ are mutually exclusive and collectively exhaustive of $\objects$ will provide a feasible solution where items in $\objects(p^i)$ are placed in bin $b$. Additionally, $\sum_{b \in \bins} C(p^\bin)$ will correspond to the objective function value of the solution.     The $\BPMCF$ can therefore be solved by finding such a collection of BDDs, and identifying a mutually exclusive and collectively exhaustive collection of paths that minimizes $\sum_{b \in \bins} C(p^\bin)$, a problem recently introduced into the literature as the \emph{consistent path problem} (\cite{BerCir17,LozBerSmi18,bergman2018decision,1807.09676}).

\paragraph{Construction:} Building BDDs for discrete optimization problems has been a major research focus in the last decade (\cite{HadHooOsulTieStu08,BerHoeHoo11,DDBook,Bergman:2016:DOD:3214682.3214686,10.1007/978-3-642-29828-8_3}).  Given a bin of size $B$, the family of sets of items that fit in the bin corresponds exactly to the set of feasible solutions to a knapsack problem, where each item $o$ has size $s(o)$ and the knapsack has capacity $B$.  The dynamic programming-based construction algorithm of a BDD representing such a set of solutions is well known (\cite{Trick2003,Behle2008,10.1007/978-3-319-44953-1_6,1802.08637}).  We adopt this algorithm, but with additional care required because of the objective function. 


We build a BDD iteratively, layer-by-layer, by associating with each node $u$ a \emph{state} $z(u) \in \mathbb{Z} \cup \mathbb{B}$ that will represent the remaining capacity in the bin for any partial solution defined by paths starting from the root and ending at $u$ and whether or not any item with the color of the object in layer $l(u)$ is selected.  The root node $r$ is assigned state $(B,0)$.  The first index represents the remaining capacity of the bin, and the second represents whether or not an item of the same color as the item $l(o)$ has been selected or not.  Recall that the items corresponding to the layers of the BDD are ordered by color. 

Having constructed layer $l$, we build layer $l+1$ by iteratively processing the nodes in $l$. When processing node $u$ with state $z(u) = (A,d)$, we calculate the states $z_0$/$z_1$ that will arise from a zero-arc/one-arc directed out of $u$.   $z_0 = (A,d)$ and $z_1 = (A-s(l(o)),1)$. Since a zero-arc dictates the omission of item $o$, neither $A$ of $d$ changes.  A one-arc dictates the inclusion of item $o$ and so the remaining capacity is $A - s( l(o) )$ and an object with this color is selected. 

For any state $z_0$ we create a zero-arc $a_0$ directed to a node $u_0$, and assign that node state $s(u_0) = z_0$.  If $A-s( l(o) ) < 0$, we do not create a one-arc, because this would exceed the capacity of the bin.   If $A-s( l(o) ) \geq 0$, we create a one-arc $a_1$ directed to a node $u_1$, and assign that node state $s(u_1) = z_1$.  The cost of the arc is 0 for $a_0$ and the cost of arc $a_1$ is 1 if the second component of $s(u)$ is 0, and is 0 otherwise.  This particular assignment of costs is because only when a one-arc results in the first item of a color being selected should a cost be assigned, as this will indicate that there are items from this color category that are selected for this bin should this arc be selected.    

If the item corresponding to layer $l$ is the last item with color $g$, we now change the second component of both $s(u_0)$ and $s(u_1)$ to 0, because in the subsequent layer $l+1$ no objects of color $g(o(l+1))$ have previously been considered. 

Finally, for $d=0,1$, if any node $\tilde{u}$ previously added to layer $l+1$ has state $z_d$, the $d$-arc directed out of $u$ is directed to $\tilde{u}$.  Otherwise, a new node $\hat{u}$ is created, added to layer $l+1$, with the $d$-arc directed from $u$ to $\hat{u}$.  After constructing all layers, each node in layer $l = n+1$ is merged into a single terminal node.  
\paragraph{Example:}  Consider an instance with 5 items of sizes 2,3,2,3,2 and colors 1,1,1,2,2, respectively, with bin capacity 4.   A BDD for this bin is presented in Figure~\ref{f:example}.  Each layer corresponds to an item.  The solid / dashed arcs correspond to one-arcs / zero-arcs. The arc costs are specified next to each arc.  Note that there can be one arcs with zero cost (e.g., the arc from $u_5$ to $u_9$).  Also, any solution corresponds to a path.  For example, selecting items $o_1$ and $o_3$ is a feasible solution that corresponds to the arc-directed path 
$r-u_2-u_5-u_9-u_{14}-t$.

\begin{figure}[h!]
	\centering
	\begin{tikzpicture}[scale=0.4][font=\sffamily, \tiny]

	\node (x1) at (14,-1) {$o_1$};
	
	\node (x2) at (14,-3) {$o_2$};
	
	\node (x3) at (14,-5) {$o_3$};
	
	\node (x4) at (14,-7) {$o_4$};
	
	\node (x5) at (14,-11) {$o_5$};
	
	\node (1) [draw,circle] [label=0:{\tiny(4,0)},draw] at (0,0) {r};

	\node (2) [draw,circle] [label=180:{\tiny(4,0)},draw] at (-2,-2) {$u_1$};
	
	\node (3) [draw,circle] [label=0:{\tiny(2,1)},draw] at (2,-2) {$u_2$};
	
	\node (4) [draw,circle] [label=180:{\tiny(4,0)},draw] at (-4,-4) {$u_3$};

	\node (5) [draw,circle] [label=0:{\tiny(1,1)},draw] at (0,-4) {$u_4$};
	
	\node (6) [draw,circle] [label=0:{\tiny(2,1)},draw] at (4,-4) {$u_5$};
	
	\node (7) [draw,circle] [label=180:{\tiny(4,0)},draw] at (-6,-6) {$u_6$};
	
	\node (8) [draw,circle] [label=180:{\tiny(2,0)},draw] at (-2,-6) {$u_7$};
	
	\node (9) [draw,circle] [label=180:{\tiny(1,0)},draw] at (2,-6) {$u_8$};
	
	\node (10) [draw,circle] [label=180:{\tiny(0,0)},draw] at (6,-6) {$u_9$};
	
	\node (11) [draw,circle] [label=180:{\tiny(4,0)},draw] at (-8,-8) {$u_{10}$};
	
	\node (12) [draw,circle] [label=180:{\tiny(1,1)},draw] at (-4,-8) {$u_{11}$};
	
	\node (13) [draw,circle] [label=180:{\tiny(2,0)},draw] at (0,-8) {$u_{12}$};
	
	\node (14) [draw,circle] [label=180:{\tiny(1,0)},draw] at (4,-8) {$u_{13}$};
	
	\node (15) [draw,circle] [label=180:{\tiny(0,0)},draw] at (8,-8) {$u_{14}$};
	
	\node (16) [draw,circle] [label=270:{\tiny(0,0)},draw] at (0,-12) {t};
	
	\path[-](1) edge [dashed] node [left] {0} (2);
	
	\path[-](1) edge node [right] {1} (3);
	
	\path[-](2) edge [dashed] node [left] {0} (4);
	
	\path[-](2) edge node [right] {1} (5);
	
	\path[-](3) edge [dashed] node [right] {0} (6);
	
	\path[-](4) edge [dashed] node [left] {0} (7);
	
	\path[-](4) edge node [right] {1} (8);
	
	\path[-](5) edge [pos = 0.15, dashed] node [right] {0} (9);
	
	\path[-](6) edge [pos=0.25, dashed] node [above] {0} (8);
	
	\path[-](6) edge node [right] {0} (10);
	
	\path[-](7) edge [dashed] node [left] {0} (11);
	
	\path[-](7) edge node [right] {1} (12);
	
	\path[-](8) edge [dashed] node [right] {0} (13);
	
	\path[-](9) edge [dashed] node [right] {0} (14);
	
	\path[-](10) edge [dashed] node [right] {0} (15);
	
	\path[-](11) edge [dashed, bend right] node [below] {0} (16);
	
	\path[-](11) edge node [below] {1} (16);
	
	\path[-](12) edge [dashed] node [below] {0} (16);
	
	\path[-](13) edge [dashed,bend right] node [left] {0} (16);
	
	\path[-](13) edge [solid,bend left] node [right] {1} (16);
	
	\path[-](14) edge [dashed] node [right] {0} (16);
	
	\path[-](15) edge [dashed] node [right] {0} (16);
	
	\end{tikzpicture}			
	\caption{Example BDD}
	\label{f:example}
\end{figure}
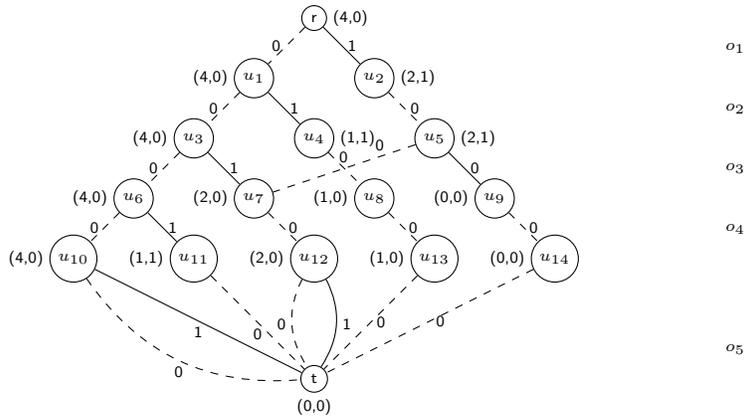

\paragraph{Network flow model:}

The BDDs allow us to formulated the consistent path problem through~(\textbf{ANF}), an \emph{Arc-based Network Flow} MIP to solve the problem. 
\[
\begin{array}{lllll}\label{Model:NetFlow}
(\textbf{ANF}) \quad & \mathrm{min}\;\; \sum\limits_{\bin \in \bins}  \sum\limits_{ \arc \in \arcs^{\binsize_\bin} } v(\arc)  y_{\bin,\arc} & && \\
& 
\sum\limits_{\substack{ \arc \in \arcs^{\binsize_\bin}; \\ \BDDnode^e(\arc) = \BDDnode }} y_{\bin,\arc} -  \sum\limits_{\substack{ \arc \in \arcs^{\binsize_\bin}; \\ \BDDnode^s(\arc) = \BDDnode }} y_{\bin,\arc} &=& 0 &\forall \bin \in \bins, \BDDnode \in \BDDnodes \backslash \{0^{\binsize_\bin},t^{\binsize(\bin)}\} \\
& 
\sum\limits_{\substack{ \arc \in \arcs^{\binsize(\bin)}; \\ \BDDnode^s(\arc) = 0^{\binsize_\bin} }} y_{\bin,\arc}   &=&  1 &\forall \bin \in \bins\\
& 
\sum\limits_{\substack{ \arc \in \arcs^{\binsize_\bin}; \\ \BDDnode^e(\arc) = t^{\binsize_\bin} }} y_{\bin,\arc}   &=&  1 &\forall \bin \in \bins\\
& 
\sum\limits_{\bin \in \bins} \sum\limits_{\substack{ \arc \in \arcs^{\binsize_\bin}; \\ \colour(\layer_{\BDDnode^e(\arc)}^{\binsize_\bin}) = \colour, \itemsize(\layer_{\BDDnode^e(\arc)}^{\binsize(\bin)}) = \itemsize }} \arcweight(\arc) y_{\bin,\arc}   &=&  |\objects_{\colour,\itemsize}| &\forall (\colour,\itemsize) \in \colours \times \itemsizes_\colour\\
& y_{\bin,\arc} \in \{0,1\} & && \forall \bin \in \bins,\arc \in \arcs^{\binsize_\bin }
\end{array}
\]
(\textbf{ANF}) employs binary variables~$y_{\bin,\arc}$, which indicate whether arc~$\arc$ composes the path selected for~$\BDD^\bin$. The first three families of equalities model the network flow constraints for each bin~$\bin$ in~$\bins$. The last family of constraints asserts that each item is picked exactly once, so they are the joint constraints of~(\textbf{ANF}).


\section{Computational Experiments}\label{sec:Experiments}
All code used to evaluate the model and algorithms presented were implemented in C++ and utilize commercial software Gurobi 8.0.0 to implement \textbf{IP} and \textbf{ANF}~(\cite{gurobi}); we used all default settings of the solver. All experiments were executed on an Intel CPU Core i7-4770 with 3.4 GHz, 32 GB of RAM. Each execution was restricted to a single thread and to a time limit of 30 minutes.

\paragraph{Instance generation:}

We generated synthetic instances. For each selected combination of~$\numbins$ and~$\sizebin$, 10 instances were generated;  in each individual instance, all bins have the same capacity.  Item sizes are randomly generated according to the following distribution: size 2 with probability 0.4; size 3 with probability 0.3; size 4 with probability 0.2; and size 5 with probability 0.1. This distribution was selected because of the authors' experience with group seating optimization applications.

Items are generated uniformly and independently at random from the above distribution until 85\% of the overall capacity is occupied. We then sequentially assign colors to the items by selecting $p$ items to form each color class.  With probability 0.6 we selected $p \in \{2,3,4\}$ items, and with probability 0.4 we select between $p \in \{5,6,7,8\}$, in both cases sampled uniformly at random. If only one item remains, we assign it to the last color. We restrict our experiments to scenarios where~$\binsize \geq 8$, as instances with smaller bins can be efficiently solved (see~Proposition~\ref{prop:easy2bins}).  We generate instances with $k \in \{10,20,30\}$ and $B \in \{8,10,12\}$, and additionally instances with $k=50$ and $B=12$ to evaluate how well \textbf{ANF} scales.

\paragraph{Results:}

The results of our experiments are shown in Table~\ref{tab:results} in aggregation.  Each row corresponds to a configuration of instances with $k,B$, as indicated by the first and second columns.  The next eight columns report solution statistics,
first for~\textbf{IP} and then for~\textbf{ANF}.  In sequence, we report the average solution times for those instances that were solved within 1800 seconds, with the number of instances solved within 1800 seconds in superscript, the average ending lower bound, the average ending upper bound, and the average gap. 

\begin{table}[]
	\centering
	\footnotesize
	\begin{tabular}{|c|c|c|c|c|c|c|c|c|c|}
		\hline
		\multicolumn{2}{|c|}{Instances} & \multicolumn{4}{c|}{IP} & \multicolumn{4}{c|}{ANP} \\
		\hline
		$\numbins$  & $\sizebin$ & Time & LB & UB & Gap & Time & LB & UB & Gap \\
		\hline
		10 & 8&	$59.05^{10}$&	10.1 &	 10.1 &	 0.0&	$0.10^{10}$&	10.1 &	 10.1 & 0.0\\
		10 & 10&	$525.52^{8}$&	11.2 &	 11.4 &	 1.6&	$0.50^{10}$&	11.4 &	 11.4 &	 0.0\\
		10 & 12&	$428.96^{6}$&	11.2 &	 11.7 &	 4.0&	$66.15^{10}$&	11.7 &	 11.7 &	 0.0\\
		20 & 8&	-&	16.4 &	 21.2 &	 22.3&	$0.98^{10}$&	21.2 &	 21.2 &	 0.0\\
		20 & 10&	-&	18.9 &	 22.6 &	 16.2&	$86.05^{10}$&	22.6 &	 22.6 & 0.0\\
		20 & 12&	-&	19.4 &	 23.7 &	 17.9&	$404.65^{9}$&	23.5 &	 23.7 &	 0.9\\
		30 & 8&	-&	23.1 &	 31.9 &	 27.6&	$37.52^{10}$&	31.9 &	 31.9 &	 0.0\\
		30 & 10&	-&	27.7 &	 34.1 &	 18.5&	$802.51^{9}$&	34.0 &	 34.1 &	 0.2\\
		30 & 12&	-&	28.0 &	 34.6 &	 18.9&	-&	30.4 &	 34.8 &	 12.4\\
		50 & 10&	-&	40.9 &	 56.2 &	 27.3&	-&	50.0 &	 56.2 &	 10.9\\
		\hline
	\end{tabular}
	\caption{Aggregate summary of results.}
	\label{tab:results}
\end{table}

We see a considerable superiority of~\textbf{ANF} over~\textbf{IP}, both in terms of gap and running time. \textbf{IP} solves only those instances with $k=10$ (and only solves 24 of the 30 instanes with this $k$) while \textbf{ANF} solves all instances with $k=10$ and $k=20$, and even 10 with $k=30$.  Additionally, the ending gap and quality of solutions are significantly better, even for those instances unsolved by both.

A depiction of the solution time and ending gaps is provided in the Figure~\ref{f:cdpp} through a cumulative distribution plot of performance. For both algorithms, the left half provides a plot with height equal to the cumulative number of instances solved at the time given on the horizontal axis. In the right half, the height of the plot corresponds to the number of instances with at most the optimality gap given on the horizontal axis by the time limit of 1800 seconds. Figure~\ref{f:cdpp} more readily depicts the overall performance of~\textbf{ANF}. After any amount of time, \textbf{ANF} solves more instances than \textbf{IP}, with smaller gaps at time limit.

\begin{figure}[t!]
	\centering
	\tiny
	\includegraphics[scale=0.3]{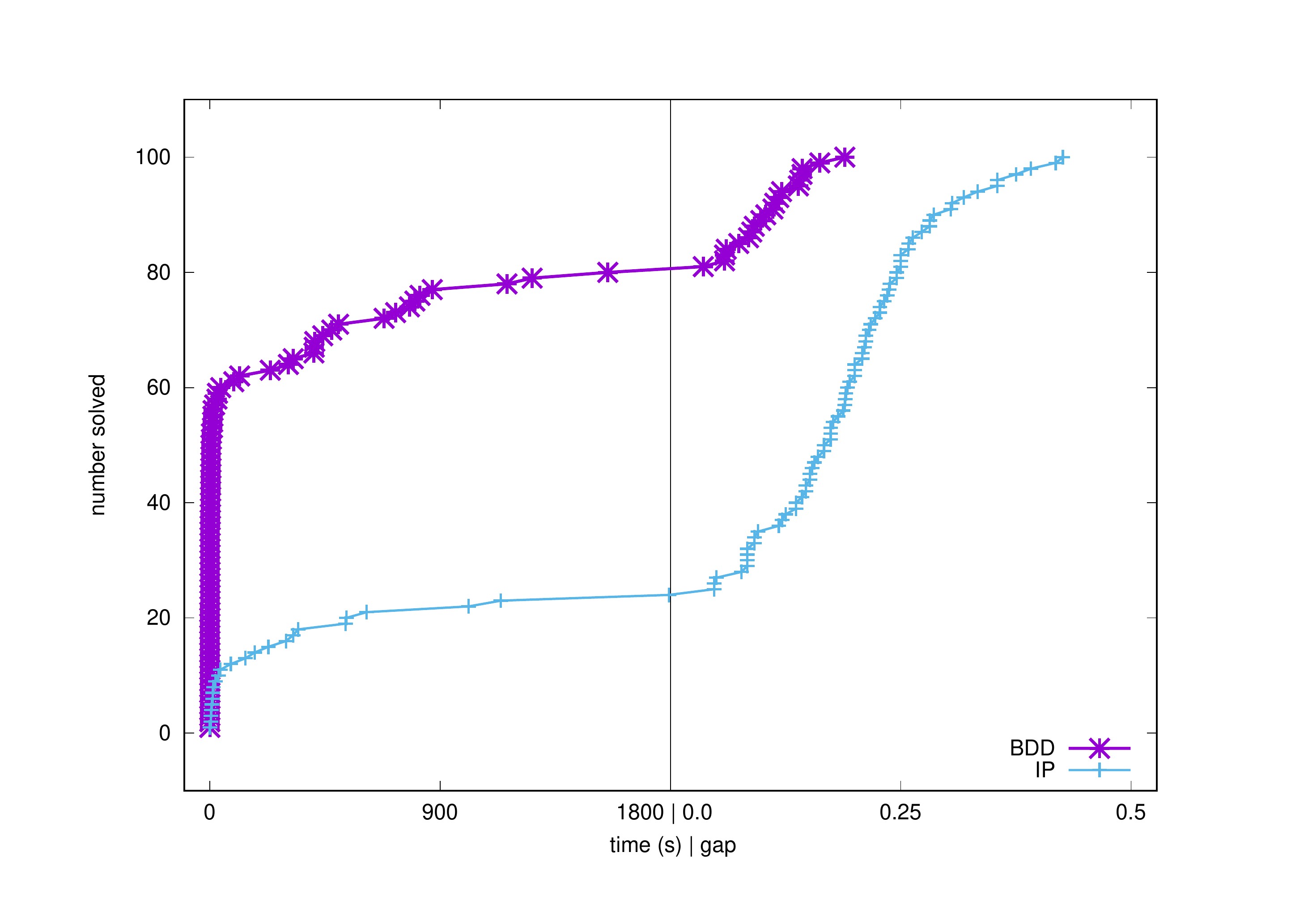}
	\captionof{figure}{Cumulative distribution plot comparing \textbf{BDD} with \textbf{IP}.}
	\label{f:cdpp}
\end{figure}

\section{Conclusion and future work}\label{sec:Conclusion}

In this work, we have introduced the \emph{bin packing with minimum color fragmentation} and presented an algorithm consisting of the integration of decision diagrams and mixed-integer linear programming. Namely, we showed how to represent the assignment of items to individual bins as binary decision diagrams and formulated the integration of the sub-problems using a network flow model. Our computational experiments have shown 
that the proposed algorithm scales well and is clearly superior to a direct formulation of~\BPMCF{}.

In future work, we intend to investigate the performance of the proposed algorithm in real-world scenarios. Additionally, we also would like to investigate alternative advanced solution approaches that have been successfully applied to other variants of the colored bin packing problem, such as branch and price.

	\bibliographystyle{plainnat}
	\bibliography{references}

\end{document}